\documentclass[12pt]{amsart}
\usepackage{amssymb}
\usepackage{color}
\usepackage{amsmath,epic,curves,amscd}
\usepackage[english]{babel}
\usepackage{graphicx}
\usepackage{comment}
\usepackage{appendix}
\usepackage{mathdots}
\usepackage[all]{xy}
\pagestyle{plain}

\newtheorem*{Lem*}{Lemma}

\theoremstyle{remark}

\renewenvironment{proof}{\noindent{\it Proof. \hskip0pt}}
                      {$\square$\par\medskip}

\textwidth 15.5 true cm
\textheight 23.9 true cm
\hoffset = -1.5 true cm
\voffset = -1.5 true cm
\setlength{\unitlength}{1.0 cm}

\begin{document}
\baselineskip 6.0 truemm
\parindent 1.5 true pc

\newcommand\xx{{\text{\sf X}}}
\newcommand\lan{\langle}
\newcommand\ran{\rangle}
\newcommand\tr{{\text{\rm Tr}}\,}
\newcommand\ot{\otimes}
\newcommand\ol{\overline}
\newcommand\join{\vee}
\newcommand\meet{\wedge}
\renewcommand\ker{{\text{\rm Ker}}\,}
\newcommand\image{{\text{\rm Im}}\,}
\newcommand\id{{\text{\rm id}}}
\newcommand\tp{{\text{\rm tp}}}
\newcommand\pr{\prime}
\newcommand\e{\epsilon}
\newcommand\la{\lambda}
\newcommand\inte{{\text{\rm int}}\,}
\newcommand\ttt{{\text{\rm t}}}
\newcommand\spa{{\text{\rm span}}\,}
\newcommand\conv{{\text{\rm conv}}\,}
\newcommand\rank{\ {\text{\rm rank of}}\ }
\newcommand\re{{\text{\rm Re}}\,}
\newcommand\ppt{\mathbb T}
\newcommand\rk{{\text{\rm rank}}\,}
\newcommand\SN{{\text{\rm SN}}\,}
\newcommand\SR{{\text{\rm SR}}\,}
\newcommand\HA{{\mathcal H}_A}
\newcommand\HB{{\mathcal H}_B}
\newcommand\HC{{\mathcal H}_C}
\newcommand\CI{{\mathcal I}}
\newcommand{\bra}[1]{\langle{#1}|}
\newcommand{\ket}[1]{|{#1}\rangle}
\newcommand\cl{\mathcal}
\newcommand\idd{{\text{\rm id}}}
\newcommand\OMAX{{\text{\rm OMAX}}}
\newcommand\OMIN{{\text{\rm OMIN}}}
\newcommand\diag{{\text{\rm Diag}}\,}
\newcommand\calI{{\mathcal I}}
\newcommand\bfi{{\bf i}}
\newcommand\bfj{{\bf j}}
\newcommand\bfk{{\bf k}}
\newcommand\bfl{{\bf l}}
\newcommand\bfp{{\bf p}}
\newcommand\bfq{{\bf q}}
\newcommand\bfzero{{\bf 0}}
\newcommand\bfone{{\bf 1}}
\newcommand\sa{{\rm sa}}
\newcommand\ph{{\rm ph}}
\newcommand\phase{{\rm ph}}
\newcommand\res{{\text{\rm res}}}
\newcommand{\algname}[1]{{\sc #1}}
\newcommand{\Setminus}{\setminus\hskip-0.2truecm\setminus\,}
\newcommand\calv{{\mathcal V}}
\newcommand\calg{{\mathcal G}}
\newcommand\calt{{\mathcal T}}
\newcommand\calvnR{{\mathcal V}_n^{\mathbb R}}
\newcommand\D{{\mathcal D}}
\newcommand\C{{\mathcal C}}

\title{Construction of three-qubit biseparable states distinguishing kinds of entanglement in a partial separability classification}

\author{Kyung Hoon Han and Seung-Hyeok Kye}
\address{Kyung Hoon Han, Department of Data Science, The University of Suwon, Gyeonggi-do 445-743, Korea}
\email{kyunghoon.han at gmail.com}
\address{Seung-Hyeok Kye, Department of Mathematics and Institute of Mathematics, Seoul National University, Seoul 151-742, Korea}
\email{kye at snu.ac.kr}
\thanks{Both KHH and SHK were partially supported by NRF-2017R1A2B4006655, Korea}

\subjclass{81P15, 15A30, 46L05, 46L07}

\keywords{three-qubit states, biseparable states, {\sf X}-shaped states, entanglement witnesses}

\begin{abstract}
We construct seven kinds of three-qubit biseparable states to show
that every class of biseparable states in the partial separability classification
proposed by Szalay and K\" ok\' enyesi [S. Szalay and Z. K\" ok\' enyesi, Phys. Rev. A {\bf
86}, 032341 (2012)] is nonempty.
\end{abstract}

\maketitle

%%%%%%%%%%%%%%%%%%%%%%%%%%%%%%%%%%%%%%%%%%%%%%%%%%%%%%%%%%%%%%%%%%%%%%%%%%%%%%%%%%%%%%%%%%%%%%%%
%%%%%%%%%%%%%%%%%%%%%%%%%%%%%%%%%%%%%%%%%%%%%%%%%%%%%%%%%%%%%%%%%%%%%%%%%%%%%%%%%%%%%%%%%%%%%%%%
%%%%%%%%%%%%%%%%%%%%%%%%%%%%%%%%%%%%%%%%%%%%%%%%%%%%%%%%%%%%%%%%%%%%%%%%%%%%%%%%%%%%%%%%%%%%%%%%
%%%%%%%%%%%%%%%%%%%%%%%%%%%%%%%%%%%%%%%%%%%%%%%%%%%%%%%%%%%%%%%%%%%%%%%%%%%%%%%%%%%%%%%%%%%%%%%%
%%%%%%%%%%%%%%%%%%%%%%%%%%%%%%%%%%%%%%%%%%%%%%%%%%%%%%%%%%%%%%%%%%%%%%%%%%%%%%%%%%%%%%%%%%%%%%%%
%%%%%%%%%%%%%%%%%%%%%%%%%%%%%%%%%%%%%%%%%%%%%%%%%%%%%%%%%%%%%%%%%%%%%%%%%%%%%%%%%%%%%%%%%%%%%%%%
\section{Introduction}

Entanglement is considered as one of the main resources in the current quantum information theory and it is an important research topic to detect entanglement. A state is said to be separable if it is a convex combination of pure product states and entangled if it is not separable. In multipartite systems, we have various kinds of separability and entanglement according to partitions of systems, and several authors \cite{{dur-cirac-tarrach},{dur-cirac},{coffman},{dur-vidal},{acin},{seevinck-uffink}}
classified them. A multipartite state is said to be biseparable if it is a convex combination of
states which are separable with respect to bipartitions of systems.
The class of all biseparable states makes the largest classes of separable states, and the classification problem is how to classify biseparable states.

Most recently, Szalay and K\" ok\' enyesi \cite{sz2012} proposed a finer classification than the previous ones for tripartite systems, but some of the classes in this classification are not known to be nonempty (see also \cite{sz2015,sz2018}).
In this paper we give analytic examples of three-qubit biseparable states to show that they are nonempty.
We construct requested examples
among {\sf X}-shaped states whose entries are zero except for diagonal and antidiagonal entries.
We had considered in \cite{han_kye_tri} entanglement witnesses corresponding to various notions of separability and characterized \cite{han_kye_tri,han_kye_optimal} such entanglement witnesses for {\sf X}-shaped self-adjoint multiqubit matrices as well as various kinds of separability. These results are the main tools to construct examples.

Recall that a tripartite state is said to be $A$-$BC$ separable if it is separable with respect to the
bipartition $A$:$BC$ of the systems $A$, $B$, and $C$. We also define $B$-$CA$ and $C$-$AB$ separability
similarly. We first construct a three-qubit state $\varrho_1$ which is
\begin{itemize}
\item[(1a)]
$A$-$BC$ separable,
\item[(1b)]
a mixture of $B$-$CA$ or $C$-$AB$ separable states,
\item[(1c)]
neither $B$-$CA$ separable nor $C$-$AB$ separable.
\end{itemize}
This shows that the class ${\mathcal C}^{2,6,1}$ is nonempty with the notation in \cite{sz2012}. The classes ${\mathcal C}^{2,6,2}$ and ${\mathcal C}^{2,6,3}$ are also nonempty by similar examples obtained from the flip operations on the systems $A$, $B$, and $C$.
We also construct a three-qubit state $\varrho_2$ which is
\begin{itemize}
\item[(2a)]
a mixture of $A$-$BC$ or $B$-$CA$ separable states,
\item[(2b)]
a mixture of $B$-$CA$ or $C$-$AB$ separable states,
\item[(2c)]
a mixture of $C$-$AB$ or $A$-$BC$ separable states,
\item[(2d)]
neither $A$-$BC$ separable nor $B$-$CA$ separable nor $C$-$AB$ separable.
\end{itemize}
This example of a state will show that the class ${\mathcal C}^{2,4}$ in \cite{sz2012} is nonempty. Finally, we construct an example of a three-qubit state $\varrho_3$ which is
\begin{itemize}
\item[(3a)]
a mixture of $A$-$BC$ or $B$-$CA$ separable states,
\item[(3b)]
a mixture of $C$-$AB$ or $A$-$BC$ separable states,
\item[(3c)]
not $A$-$BC$ separable,
\item[(3d)]
not a mixture of $B$-$CA$ or $C$-$AB$ separable states,
\end{itemize}
which gives us an example of states belonging to the class ${\mathcal C}^{2,3,1}$ in \cite{sz2012}. Similar examples in
the classes ${\mathcal C}^{2,3,2}$ and ${\mathcal C}^{2,3,3}$ can be exhibited.

\section{Construction}

Three-qubit states are considered as $8\times 8$ matrices, by
the identification $M_8=M_2\otimes M_2\otimes M_2$ with the lexicographic order
of indices in the tensor product. Therefore,
a three-qubit {\sf X}-shaped Hermitian matrix is of the form
$$
\xx(a,b,z)= \left(
\begin{matrix}
a_1 &&&&&&& z_1\\
& a_2 &&&&& z_2 & \\
&& a_3 &&& z_3 &&\\
&&& a_4&z_4 &&&\\
&&& \bar z_4& b_4&&&\\
&& \bar z_3 &&& b_3 &&\\
& \bar z_2 &&&&& b_2 &\\
\bar z_1 &&&&&&& b_1
\end{matrix}
\right),
$$
for $a,b\in\mathbb R^4$ and $z\in\mathbb C^4$.
An {\sf X}-shaped state is also called an \xx-state for brevity.

By the result in \cite{han_kye_optimal} (see Proposition 5.2 therein), we see that
a three-qubit {\sf X}-state $\varrho = \xx(a,b,z)$ is $A$-$BC$
separable if and only if it is of positive partial transpose as a
bipartite state with respect to the partition $A$:$BC$.
Therefore, $\varrho$ is $A$-$BC$ separable if and only if the conditions
\begin{equation}\label{1|23}
\begin{aligned}
&\min\{ \sqrt{a_1b_1}, \sqrt{a_4b_4} \} \ge \max\{ |z_1|,|z_4| \}, \\
&\min\{ \sqrt{a_2b_2}, \sqrt{a_3b_3} \} \ge \max\{ |z_2|,|z_3| \}
\end{aligned}
\end{equation}
are satisfied. Similarly, we also see that $\varrho$ is $B$-$CA$ separable if and only if
\begin{equation}\label{2|31}
\begin{aligned}
&\min\{ \sqrt{a_1b_1}, \sqrt{a_3b_3} \} \ge \max\{ |z_1|,|z_3| \}, \\
&\min\{ \sqrt{a_2b_2}, \sqrt{a_4b_4} \} \ge \max\{ |z_2|,|z_4| \}\
\end{aligned}
\end{equation}
hold and $C$-$AB$ separable if and only if
\begin{equation}\label{3|12}
\begin{aligned}
&\min\{ \sqrt{a_1b_1}, \sqrt{a_2b_2} \} \ge \max\{ |z_1|,|z_2| \}, \\
&\min\{ \sqrt{a_3b_3}, \sqrt{a_4b_4} \} \ge \max\{ |z_3|,|z_4| \}.
\end{aligned}
\end{equation}

We begin with the two {\sf X}-states
$$
\begin{aligned}
\varrho_{1B} &= \xx((0,1,0,1),(0,1,0,1),(0,1,0,0)), \\
\varrho_{1C} &= \xx((0,0,1,1),(0,0,1,1),(0,0,1,0)),
\end{aligned}
$$
which are $B$-$CA$ and $C$-$AB$ separable by (\ref{2|31}) and (\ref{3|12}), respectively.
The mixture of them
$$
\varrho_1=\varrho_{1B}+\varrho_{1C}=\xx((0,1,1,2),(0,1,1,2),(0,1,1,0))
$$
meets the condition (\ref{1|23}), but violates both (\ref{2|31}) and (\ref{3|12}).
Therefore, the three-qubit {\sf X}-state $\varrho_1$ satisfies all the conditions (1a), (1b) and (1c).

For the second example, we consider the {\sf X}-states
$$
\begin{aligned}
\varrho_{2A0} &= \xx((0,1,2,0),(0,1,2,0),(0,1,1,0)),\\
\varrho_{2A1} &= \xx((0,2,1,0),(0,2,1,0),(0,1,1,0)),
\end{aligned}
$$
which are $A$-$BC$ separable. We also consider the states
$$
\begin{aligned}
\varrho_{2B0} &= \xx((0,1,0,2),(0,1,0,2),(0,0,0,1)),\\
\varrho_{2B1} &= \xx((0,2,0,1),(0,2,0,1),(0,1,0,0)),\\
\varrho_{2C0} &= \xx((0,0,1,2),(0,0,1,2),(0,0,0,1)),\\
\varrho_{2C1} &= \xx((0,0,2,1),(0,0,2,1),(0,0,1,1)).
\end{aligned}
$$
We note that both $\varrho_{2B0}$ and $\varrho_{2B1}$ are $B$-$CA$ separable;
both $\varrho_{2C0}$ and $\varrho_{2C1}$ are $C$-$AB$ separable; however, the state
$$
\begin{aligned}
\varrho_2
&=\xx((0,2,2,2),(0,2,2,2),(0,1,1,1))\\
&=\varrho_{2A0} + \varrho_{2B0}\\
& = \varrho_{2B1} + \varrho_{2C1}\\
&= \varrho_{2C0} + \varrho_{2A1}
\end{aligned}
$$
violates all the conditions (\ref{1|23}), (\ref{2|31}), and (\ref{3|12}) and so we conclude that the state
$\varrho_2$ satisfies all the conditions (2a), (2b), (2c), and (2d).

In order to construct the third example $\varrho_3$ of biseparable state, we need the characterization
(see \cite{han_kye_tri}, Theorem 6.2) of entanglement witnesses
among {\sf X}-shaped three-qubit Hermitian matrices $W = \xx(s,t,u)$: Recall that
$\lan W, \varrho \ran \ge 0$ for all $A$-$BC$ separable states $\varrho$ if and only if the relations
$$
\begin{aligned}
\sqrt{s_1t_1}+\sqrt{s_4t_4} &\ge |u_1|+|u_4|,\\
\sqrt{s_2t_2}+\sqrt{s_3t_3} &\ge |u_2|+|u_3|
\end{aligned}
$$
hold, where $\lan A,B\ran=\tr(A^\ttt B)$ is the usual bilinear pairing between matrices.
Similarly, we have $\lan W, \varrho \ran \ge 0$ for all $B$-$CA$ separable states $\varrho$ if and only if
\begin{equation}\label{W2|31}
\begin{aligned}
\sqrt{s_1t_1} + \sqrt{s_3t_3} &\ge |u_1|+|u_3|,\\
\sqrt{s_2t_2}+\sqrt{s_4t_4} &\ge |u_2|+|u_4|
\end{aligned}
\end{equation}
are satisfied. Finally, $\lan W, \varrho \ran \ge 0$ for all $C$-$AB$ separable states $\varrho$ if and only if
\begin{equation}\label{W3|12}
\begin{aligned}
\sqrt{s_1t_1}+\sqrt{s_2t_2} &\ge |u_1|+|u_2|,\\
\sqrt{s_3t_3}+\sqrt{s_4t_4} &\ge |u_3|+|u_4|.
\end{aligned}
\end{equation}

\begin{Lem*}\label{lemma}
Let $\varrho$ be a three-qubit state with the {\sf X}-part $\xx(a,b,z)$.
Then we have the following:
\begin{enumerate}
\item[(i)] if $\varrho$ is a mixture of $B$-$CA$ or $C$-$AB$ separable states, then
\begin{equation}\label{2|31,3|12}
\min\{ \sqrt{a_1b_1}+\sqrt{a_4b_4}, \sqrt{a_2b_2}+\sqrt{a_3b_3} \} \ge \max\{ |z_1|+|z_4|, |z_2|+|z_3| \};
\end{equation}
\item[(ii)] if $\varrho$ is a mixture of $C$-$AB$ or $A$-$BC$ separable states, then
$$
\min\{ \sqrt{a_1b_1}+\sqrt{a_3b_3}, \sqrt{a_2b_2}+\sqrt{a_4b_4} \} \ge \max\{ |z_1|+|z_3|, |z_2|+|z_4| \};
$$
\item[(iii)] if $\varrho$ is a mixture of $A$-$BC$ or $B$-$CA$ separable states, then
$$
\min\{ \sqrt{a_1b_1}+\sqrt{a_2b_2}, \sqrt{a_3b_3}+\sqrt{a_4b_4} \} \ge \max\{ |z_1|+|z_2|, |z_3|+|z_4| \}.
$$
\end{enumerate}
\end{Lem*}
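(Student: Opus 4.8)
The plan is to prove each implication by duality, using the entanglement-witness characterizations recalled just before the statement. I will prove (i) in detail; parts (ii) and (iii) follow by relabeling the systems $A$, $B$, $C$. So suppose $\varrho$ is a mixture of $B$-$CA$ separable and $C$-$AB$ separable states; I want to deduce \eqref{2|31,3|12} for its \xx-part $\xx(a,b,z)$. The key observation is that the desired inequality \eqref{2|31,3|12} is equivalent to saying that a certain \xx-shaped Hermitian matrix $W$ tests nonnegatively against $\varrho$, and that the same $W$ is simultaneously a witness for both $B$-$CA$ separability and $C$-$AB$ separability, so it tests nonnegatively against any mixture of such states.

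First I would reduce to the two scalar inequalities separately. It suffices to show, for each of the four choices of a sign pattern, that
$$
\sqrt{a_ib_i}+\sqrt{a_jb_j}\ \ge\ |z_k|+|z_l|
$$
for the relevant index pairs $\{i,j\}\in\{\{1,4\},\{2,3\}\}$ and $\{k,l\}\in\{\{1,4\},\{2,3\}\}$; the min/max in \eqref{2|31,3|12} is just the conjunction of these four inequalities. Fix one such inequality, say with index data determining which diagonal entries appear on the left and which antidiagonal entries on the right. I will build an \xx-shaped Hermitian matrix $W=\xx(s,t,u)$ whose pairing with $\varrho$ is, up to a positive constant, exactly $\bigl(\sqrt{a_ib_i}+\sqrt{a_jb_j}\bigr)-\bigl(|z_k|+|z_l|\bigr)$ — concretely, choose $s,t$ supported on the relevant diagonal positions with $\sqrt{s_mt_m}$ normalized, and choose the phases of the $u$-entries to line up with the phases of the corresponding $z$-entries so that $\langle W,\varrho\rangle = \operatorname{(diagonal contribution)} - |z_k| - |z_l|$; here one uses that the off-\xx entries of $\varrho$ contribute nothing to the pairing since $W$ is \xx-shaped, and that $\langle A,B\rangle=\tr(A^{\ttt}B)$.

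The crux is then to check that this particular $W$ satisfies both \eqref{W2|31} and \eqref{W3|12}, i.e.\ is a witness for $B$-$CA$ separability and for $C$-$AB$ separability at once. With only two of the four slots of $s$ and $t$ turned on, most of the terms $\sqrt{s_mt_m}$ vanish, so the witness inequalities \eqref{W2|31} and \eqref{W3|12} reduce to checking a small finite list of trivial inequalities of the form $\sqrt{s_mt_m}\ge |u_m|$ or $\sqrt{s_mt_m}+\sqrt{s_nt_n}\ge |u_m|+|u_n|$, which hold by the normalization built into the construction. Once $W$ is a witness for both bipartitions, we get $\langle W,\sigma\rangle\ge 0$ for every $B$-$CA$ separable $\sigma$ and every $C$-$AB$ separable $\sigma$, hence $\langle W,\varrho\rangle\ge 0$ for any convex mixture $\varrho$ of such states, and this is precisely the scalar inequality we wanted. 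Running through the four index-pattern choices gives \eqref{2|31,3|12}.

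The main obstacle I anticipate is bookkeeping rather than mathematics: one must choose the supports and phases of $W$ so that it is simultaneously a legitimate witness for both of the two bipartitions appearing in the hypothesis, and verify this for each of the four scalar inequalities — since the pairs $\{1,4\},\{2,3\}$ on the left interact with $\{1,4\},\{2,3\}$ on the right in four combinations, and the index grouping $\{1,3\},\{2,4\}$ (for $B$-$CA$) and $\{1,2\},\{3,4\}$ (for $C$-$AB$) must both tolerate the chosen support. The point that makes it work is that a two-point support $\{i,j\}$ among $\{1,2,3,4\}$ always splits across the $B$-$CA$ pairing into singletons unless $\{i,j\}=\{1,3\}$ or $\{2,4\}$, and similarly for $C$-$AB$; choosing $\{i,j\}\in\{\{1,4\},\{2,3\}\}$ makes both witness conditions degenerate into the single-entry inequalities, which is exactly the flexibility we need. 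Parts (ii) and (iii) are then obtained by applying (i) after the flip operations permuting $A$, $B$, $C$, under which the \xx-form and all three separability notions permute accordingly.
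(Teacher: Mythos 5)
Your proposal is correct and follows essentially the same route as the paper: for each scalar inequality you pair $\varrho$ with an \xx-shaped witness whose diagonal support is $\{1,4\}$ or $\{2,3\}$ and whose antidiagonal entries have unit modulus with phases matched to $z$, then verify that it satisfies both (\ref{W2|31}) and (\ref{W3|12}); the paper's $W_1,W_2$ are precisely your witnesses for the two cross inequalities, while the two remaining inequalities $\sqrt{a_ib_i}\ge|z_i|$ hold for every state by positivity of the corresponding $2\times 2$ principal submatrices. The only point you leave implicit is the degenerate case $a_mb_m=0$, where the normalization $s_m=\sqrt{b_m/a_m}$, $t_m=\sqrt{a_m/b_m}$ is unavailable; the paper disposes of this by applying the argument to $\varrho+\varepsilon I$ and letting $\varepsilon\to 0$.
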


\begin{proof}
Suppose that $a_i, b_i>0$ for all $i=1,2,3,4$.
We consider \xx-shaped three-qubit Hermitian matrices $W_1$ and $W_2$ as
$$
\begin{aligned}
W_1=&\xx((0,\sqrt{b_2 \over a_2},\sqrt{b_3 \over a_3},0),(0,\sqrt{a_2 \over b_2},\sqrt{a_3 \over b_3},0),(-e^{-{\rm i}\theta_1},0,0,-e^{-{\rm i}\theta_2})),\\
W_2=&\xx((\sqrt{b_1 \over a_1},0,0,\sqrt{b_4 \over a_4}),(\sqrt{a_1 \over b_1},0,0,\sqrt{a_4 \over b_4}),(0,-e^{-{\rm i}\theta_2},-e^{-{\rm i}\theta_3},0))
\end{aligned}
$$
for $\theta_i = \arg z_i$.
Then $W_1$ and $W_2$ obey both (\ref{W2|31}) and (\ref{W3|12}).
Therefore, we have
$$
0 \le {1 \over 2} \lan \varrho, W_1 \ran = \sqrt{a_2b_2}+\sqrt{a_3b_3}-|z_1|-|z_4|
$$
and
$$
0 \le {1 \over 2} \lan \varrho, W_2 \ran = \sqrt{a_1b_1}+\sqrt{a_4b_4}-|z_2|-|z_3|.
$$
For general $a_i, b_i \ge 0$, we apply the above argument to $\varrho+\varepsilon I$ for $\varepsilon>0$ and take $\varepsilon \to 0$,
to complete the proof of statement (i).

By the flip on the first and second subsystems of $M_2 \otimes M_2 \otimes M_2$, we see that (i) implies (ii).
Similarly, we consider the flip on the first and third subsystems of $M_2 \otimes M_2 \otimes M_2$, to see that
(i) implies (iii).
\end{proof}

Now we consider the states
$$
\begin{aligned}
\varrho_{3A0} &= \xx((0,1,3,0),(0,1,3,0),(0,1,1,0)),\\
\varrho_{3A1} &= \xx((0,2,2,0),(0,2,2,0),(0,2,0,0))
\end{aligned}
$$
which are $A$-$BC$ separable. We also consider the states
$$
\begin{aligned}
\varrho_{3B} &=\xx((0,1,0,1),(0,1,0,1),(0,1,0,1)),\\
\varrho_{3C} &=\xx((0,0,1,1),(0,0,1,1),(0,0,1,1)).
\end{aligned}
$$
Note that $\varrho_{3B}$ and $\varrho_{3C}$ are $B$-$CA$ and $C$-$AB$ separable, respectively.
Therefore, the state
$$
\begin{aligned}
\varrho_3
&=\xx((0,2,3,1),(0,2,3,1),(0,2,1,1))\\
&=\varrho_{3A0}+\varrho_{3B}\\
&=\varrho_{3A1}+\varrho_{3C}
\end{aligned}
$$
satisfies the conditions (3a) and (3b). The state $\varrho_3$ meets
the conditions (3c) and (3d), because it violates both the
relations (\ref{1|23}) and (\ref{2|31,3|12}). This completes the
construction of $\varrho_1$, $\varrho_2$, and $\varrho_3$.

\section{Conclusion}

We have exhibited analytic examples of three-qubit biseparable
states which belong to the seven classes ${\mathcal C}^{2,6,1}$,
${\mathcal C}^{2,6,2}$, ${\mathcal C}^{2,6,3}$, ${\mathcal
C}^{2,4}$, ${\mathcal C}^{2,3,1}$, ${\mathcal C}^{2,3,2}$, and
${\mathcal C}^{2,3,3}$, respectively. This shows that all the classes
in the partial separability classification of \cite{sz2012} are nonempty in
the tripartite case.

\section*{Acknowledgments}
This research was supported by Basic Science Research Program through the National Research Foundation of Korea funded by the Ministry of Education, Science and Technology (Grant No. NRF-2017R1A2B4006655).

%%%%%%%%%%%%%%%%%%%%%%%%%%%%%%%%%%%%%%%%%%%%%%%%%%%%%%%%%%%%%%%%%%%%%%%%

\end{document}